\theoremstyle{plain}
\newtheorem{theo}{Theorem}[section]
\newtheorem{prop}[theo]{Proposition}
\theoremstyle{definition}
\newtheorem{rema}[theo]{Remark}
\newtheorem{exam}[theo]{Example}
\newcommand{\cT}{\mathcal{T}}
\newcommand{\cH}{\mathcal{H}}
\newcommand{\cV}{\mathcal{V}}
\newcommand{\cK}{\mathcal{K}}
\newcommand{\ketbra}[2]{\ket{#1}\!\bra{#2}}
\newcommand{\abs}[1]{\left|#1\right|}
\renewcommand{\Re}{\operatorname{Re}}
\begin{document}

\title[Perfect Discrimination of Non-Orthogonal Separable Pure States]{Perfect Discrimination of Non-Orthogonal Separable Pure States on Bipartite System in General Probabilistic Theory}

\author{Hayato Arai$^1$, Yuuya Yoshida$^1$, and Masahito Hayashi$^{1,2,3}$}

\address{$^1$ Graduate School of Mathematics, Nagoya University, Nagoya, Japan}
\address{$^2$ Shenzhen Institute for Quantum Science and Engineering, Southern University
of Science and Technology, Nanshan District, Shenzhen 518055,
People's Republic of China}
\address{$^3$ Centre for Quantum Technologies, National University of Singapore, 3 Science
Drive 2, 117542, Singapore}
\eads{\mailto{m18003b@math.nagoya-u.ac.jp}, \mailto{m17043e@math.nagoya-u.ac.jp}, \mailto{masahito@math.nagoya-u.ac.jp}}
\vspace{10pt}

\begin{abstract}
We address perfect discrimination of two separable states. 
When available states are restricted to separable states, 
we can theoretically consider a larger class of measurements than the class of measurements allowed in quantum theory. 
The framework composed of the class of separable states and the above extended class of measurements 
is a typical example of general probabilistic theories. 
In this framework, 
we give a necessary and sufficient condition to discriminate two separable pure states perfectly.
In particular, we derive measurements explicitly to discriminate two separable pure states perfectly, 
and find that some non-orthogonal states are perfectly distinguishable. 
However, the above framework does not improve the \textit{capacity}, namely, 
the maximum number of states that are simultaneously and perfectly distinguishable.
\end{abstract}

\vspace{2pc}
\noindent{\it Keywords\/}: perfect discrimination, separable states, general probabilistic theories

\maketitle

\section{Introduction}\label{sect1}

Entanglement is a resource for miracle performance of quantum information processing \cite{Bennett.et.al:1993,Bennett.Wiesner:1992}.
Even when a quantum state has no entanglement, 
entanglement in a measuring process brings us performance that measuring processes without quantum correlation cannot realize.
In fact, when we discriminate the $n$-fold tensor products of two quantum states, 
the performance of measurements with quantum correlation is beyond that of any measurement without quantum correlation, 
e.g., local operation and classical communication (LOCC) and separable measurement \cite{Hiai.Petz:1991,Ogawa.Nagaoka:2000,Audenaert.et.al:2007,Nussbaum.Szkola:2009,HayashiBook:2017}. 
The difference between the first and second performance can be derived from the following two classes of measurements. 
One is the class of measurements allowed in quantum theory and the other is the class of measurements with only separable form.
The first class achieves strictly better performance than the second class in the above discrimination.

All the above studies of state discrimination considered classes of measurements allowed in quantum theory, 
but there is a theoretical possibility that 
a larger class of measurements brings us more miracle performance of state discrimination than that of quantum theory. 
In order to consider a larger class of measurements, 
we need to restrict available states. 
Such a framework is discussed in general probabilistic theories (GPTs) 
\cite{Kimura.et.al:2010,Kimura.et.al:2016,Muller.et.al:2012,Janotta:2013,Janotta:2014,Barnum.et.al:2006,Barnum.et.al:2008,Barnum:2010,Dahlsten.et.al:2012,Lami.et.al:2017,Hamamura:2018,Yoshida:2018,Aubrun.et.al:2018,Matsumoto:2018,Short.et.al:2010,Bae.et.al:2016}, 
which are a generalization of quantum theory and classical probability theory. 
GPTs are the most general framework to characterize states, measurements, and time evolution. 
Although some preceding studies compared GPTs with quantum theory 
\cite{Barnum.et.al:2006,Dahlsten.et.al:2012,Lami.et.al:2017,Barnum:2010,Janotta:2013,Hamamura:2018}, 
few studies clarified the difference between quantum theory and other GPTs in the viewpoint of state discrimination. 
Hence, to clarify the difference, we focus on the following typical GPT on a bipartite system: 
we restrict available states to separable states on the composite system and 
this restriction allows us to consider theoretically measurements that are not allowed in quantum theory. 
The framework composed of the class of separable states and the class of such measurements is a typical example of GPTs and is denoted by SEP.

The difference between quantum theory and SEP
can be characterized by the relation between 
the \textit{positive and dual cones} appeared in quantum theory and SEP, as 
illustrated in figure~\ref{fig:cone}. 
A \textit{positive cone} defines the set of all states in a GPT 
so that a state is given as an element of a positive cone whose trace is one. 
For example, the positive cone of quantum theory is the set of all positive semi-definite matrices 
and the positive cone of SEP is the set of all matrices with separable form. 
Thus, states in SEP are restricted to separable states, and   
the positive cone of SEP is smaller than that of quantum theory. 
This restriction makes bit commitment possible under SEP \cite{Barnum.et.al:2008}.
Furthermore, the \textit{dual cone} of a positive cone defines measurements of a GPT 
so that a measurement is given as a decomposition $\{M_i\}_i$ of the identity matrix $I$. 
More precisely, all elements $M_i$ lie in the dual cone and satisfy $\sum_i M_i=I$. 
For example, the dual cone of quantum theory is also the set of all positive semi-definite matrices 
and the dual cone of SEP is the set of all matrices $Y$ 
that satisfy $\Tr XY\ge0$ for all matrices $X$ with separable form. 
Thus the dual cone of SEP is larger than that of quantum theory. 
Therefore, measurements of SEP contain not only those of quantum theory 
but also those that quantum theory cannot realize.

\begin{figure}[t]
	\centering
	\includegraphics[scale=0.3]{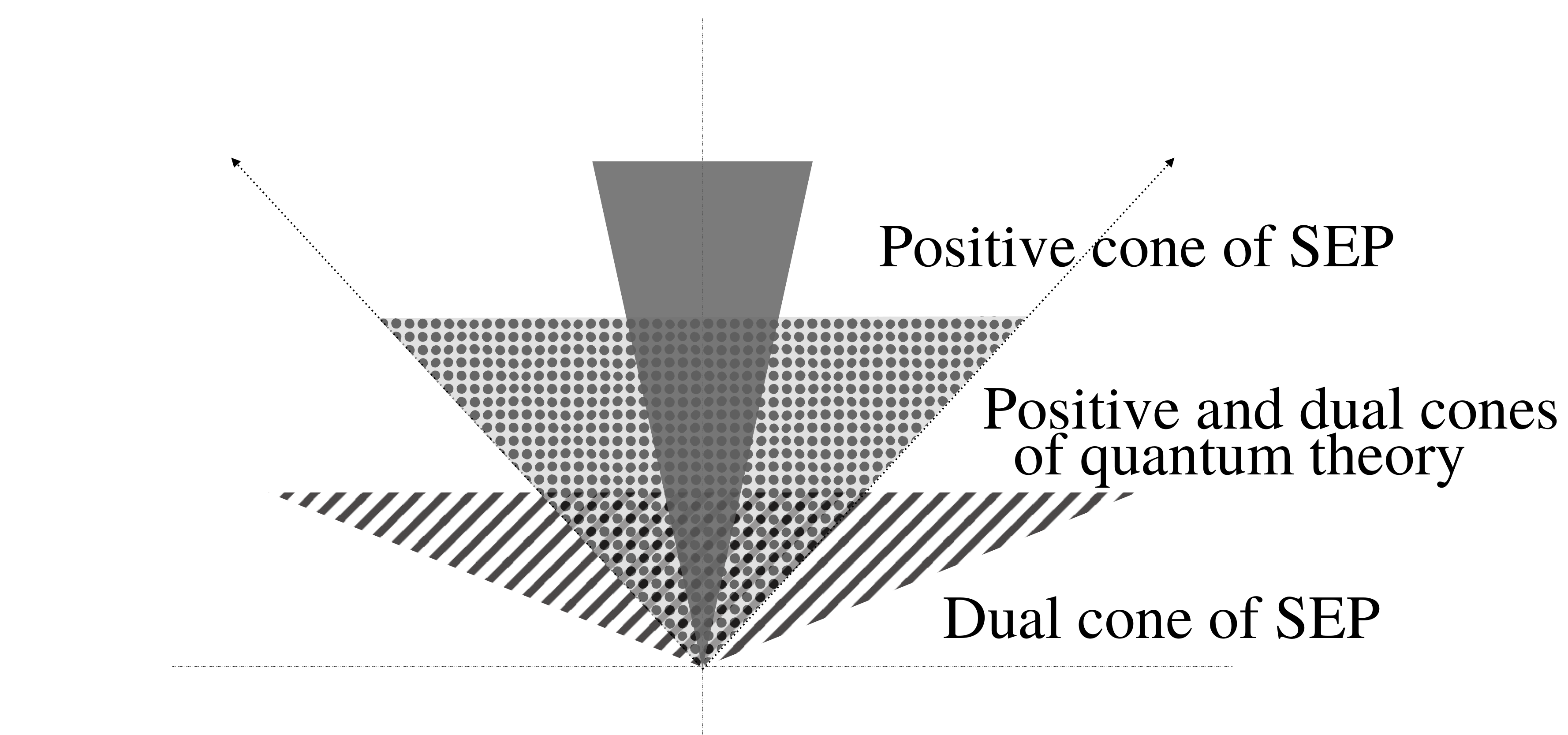}
	\caption{A sketch of the inclusion relation of positive and dual cones.}
	\label{fig:cone}
\end{figure}

In this paper, we address perfect discrimination of two pure states in SEP.
A main goal of this paper is to reveal how much better
the performance of perfect discrimination in SEP is than that in quantum theory. 
In quantum theory, it is well-known that 
orthogonality of two states is necessary and sufficient to discriminate two states perfectly \cite{Nielsen.Chuang.Book:2000}.
This fact is not changed even if we restrect the class of measurements to LOCC \cite{Walgate.et.al:2000}.
However, as shown in this paper, there exists a non-orthogonal pair of two separable pure states that can be discriminated in SEP.
Moreover, we derive a necessary and sufficient condition for state discrimination in SEP.
The necessary and sufficient condition implies that $2n$-copies $\rho_1^{\otimes 2n}$ and $\rho_2^{\otimes 2n}$ of pure states
are perfectly distinguishable for a sufficiently large $n$ if $\rho_1\not=\rho_2$.  
In this sense, SEP is completely different from quantum theory.

Since our necessary and sufficient condition reveals that some non-orthogonal states in SEP can be discriminated perfectly,
one might think that the \textit{capacity} in SEP is improved 
in comparison with the capacity in quantum theory. 
Here the capacity in a GPT is the maximum number of states 
that are simultaneously and perfectly distinguishable in the GPT, 
and expresses the limit of communication quantity 
per single use of quantum communication. 
The capacity in quantum theory is equal to the dimension of a quantum system,
and an interesting relation for the capacities in GPTs has been derived \cite{Muller.et.al:2012}. 
Using the relation \cite[lemma~24]{Muller.et.al:2012}, 
we find that the capacity in SEP is equal to that in quantum theory.


The remaining of this paper is organized as follows.
The beginning of section~\ref{sect2} formulates our extended class of measurements 
and gives a perfectly distinguishable pair of two separable pure states that are not orthogonal.
The latter of section~\ref{sect2} gives a necessary and sufficient condition 
to discriminate two separable pure states in SEP perfectly (theorem~\ref{theo:Main}). 
Also, the latter of section~\ref{sect2} discusses the capacity in SEP (theorem~\ref{theo:capa}). 
Section~\ref{sect3} proves the sufficiency of theorem~\ref{theo:Main} and 
section~\ref{sect4} does the necessity of theorem~\ref{theo:Main}. 
Section~\ref{sect5} is devoted to further discussion.

\section{Perfectly distinguishable pairs of two pure states in SEP}\label{sect2}

First, let us describe our framework SEP and notational conventions. 
Let $\cH_A$ and $\cH_B$ be two finite-dimensional complex Hilbert spaces. 
We denote by $\cT(AB)$ and $\cT_+(AB)$ the set of all Hermitian matrices on $\cH_A\otimes\cH_B$ 
and the set of all positive semi-definite matrices on $\cH_A\otimes\cH_B$, respectively. 
The sets $\cT(A)$, $\cT(B)$, $\cT_+(A)$, and $\cT_+(B)$ are defined similarly. 
In quantum theory, available states are elements of $\cT_+(AB)$ with trace one. 
However, in this paper we look at the scenario 
where the only available states are separable states 
we restrict available states to separable states, i.e., elements of 
\[
\mathrm{SEP}(A;B)
\coloneqq \Set{\sum_i X_i^A\otimes X_i^B | X_i^A\in\cT_+(A),\ X_i^B\in\cT_+(B)\ (\forall i)}
\]
with trace one. 
In order to address state discrimination, 
we must also define measurements of SEP. 
In quantum theory, measurements are given as positive-operator valued measures (POVMs). 
That is, a measurement $\{M_i\}_i$ satisfies $M_i\in\cT_+(AB)$ and $\sum_i M_i=I$ for any outcome $i$. 
However, since we restrict available states to separable states, 
measurements of SEP form a larger class than those of quantum theory. 
A measurement $\{M_i\}_i$ of SEP is defined by the conditions 
\[
M_i\in\mathrm{SEP}^*(A;B)\ (\forall i),\quad \sum_i M_i=I,
\]
where $\mathrm{SEP}^*(A;B)$ denotes the dual cone of $\mathrm{SEP}(A;B)$ and is defined as 
\[
\mathrm{SEP}^*(A;B)
= \set{Y\in\cT(AB) | \Tr XY\ge0\ (\forall X\in\mathrm{SEP}(A;B))}.
\]
Since the inclusion relation $\mathrm{SEP}^*(A;B)\subset\cT_+(AB)$ holds, 
measurements of SEP form a larger class than those of quantum theory.

\begin{rema}
For readers' convenience, we describe SEP again according to GPTs. 
Let $\cV$ be a finite-dimensional real vector space with an inner product $\langle\cdot,\cdot\rangle$. 
We say that $\cK\subset\cV$ is a (proper) \textit{positive cone} 
if $\cK$ is a closed convex set satisfying the following conditions: 
\begin{itemize}
	\item
	$\alpha x\in\cK$ for all $\alpha\ge0$ and $x\in\cK$,
	\item
	$\cK\cap(-\cK)=\{0\}$,
	\item
	The interior of $\cK$ is non-empty.
\end{itemize}
Also, the \textit{dual cone} $\cK^\ast$ of a positive cone $\cK$ is defined as 
\[
\cK^\ast = \set{y\in\cV | \langle x,y\rangle \ge0 \ (\forall x\in\cK)}.
\]
A GPT consists of a real vector space $\cV$, a positive cone $\cK$, 
and an element $u$ of the interior of $\cK^\ast$. 
A \textit{state} $\rho$ of a GPT $(\cV, \cK, u)$ is given as 
an element of $\cK$ satisfying $\langle \rho, u\rangle=1$. 
Also, a \textit{measurement} $\{m_i\}_{i=1}^n$ of the GPT is given as 
a family $\{m_i\}_{i=1}^n$ composed of elements in $\cK^\ast$ satisfying $\sum_{i=1}^n m_i =u$. 
As a framework of states and measurements, 
quantum theory is equivalent to the GPT $(\cT(A), \cT_+(A), I)$, 
and our framework SEP is equivalent to the GPT $(\cT(AB), \mathrm{SEP}(A;B), I)$. 
Also, SEP $(\cT(AB), \mathrm{SEP}(A;B), I)$ is a composite system of two quantum subsystems 
$(\cT(A), \cT_+(A), I)$ and $(\cT(B), \cT_+(B), I)$. 
Since the dual cone $\mathrm{SEP}^\ast$ includes any entanglement witness, 
the framework SEP is often called \textit{witness theory} \cite{Lami.et.al:2017}. 
Moreover, the positive cone SEP is the smallest cone of all positive cones of composite systems of two quantum subsystems, 
and thus it is called the \textit{minimal tensor product} \cite{Janotta:2014}.
\end{rema}

Now, let us consider state discrimination in SEP. 
Let $\{\rho_i\}_{i=1}^n$ be a family of $n$ states. 
Then we say that $\{\rho_i\}_{i=1}^n$ is \textit{perfectly distinguishable} in SEP (resp.\ quantum theory) 
if there exists a measurement $\{M_j\}_{j=1}^n$ of SEP (resp.\ quantum theory) 
such that $\Tr M_j\rho_i=\delta_{ij}$, where $\delta_{ij}$ denotes the Kronecker delta. 
It is well-known that $\{\rho_i\}_{i=1}^n$ is perfectly distinguishable in quantum theory 
if and only if any two distinct states of $\{\rho_i\}_{i=1}^n$ are orthogonal, i.e., 
$\Tr\rho_i\rho_j=\delta_{ij}$ for all $i\not=j$. 
In this paper, we address the case $n=2$ mainly. 

Example~\ref{exam} gives an example that two states are perfectly distinguishable and not orthogonal. 
For this purpose, we consider the case where $\cH_A$ and $\cH_B$ are two-dimensional 
(hereinafter, it is called the $(2,2)$-dimensional case). 
In this case, the dual cone $\mathrm{SEP}^*(A;B)$ 
can be expressed explicitly by using the partial transpose operation $\Gamma$, 
which throughout the paper we assume to be on subsystem $B$. 
Since for $2\times2$ matrices $C=(c_{ij})_{i,j}$ and $D=(d_{ij})_{i,j}$ 
the tensor product matrix $C\otimes D$ is expressed as 
\[
C\otimes D =
\left[
\begin{array}{cc:cc}
	c_{11}d_{11} & c_{11}d_{12} & c_{12}d_{11} & c_{12}d_{12}\\
	c_{11}d_{21} & c_{11}d_{22} & c_{12}d_{21} & c_{12}d_{22}\\ \hdashline
	c_{21}d_{11} & c_{21}d_{12} & c_{22}d_{11} & c_{22}d_{12}\\
	c_{21}d_{11} & c_{21}d_{12} & c_{22}d_{11} & c_{22}d_{12}
\end{array}
\right],
\]
the partial transpose $\Gamma(X)$ of a matrix $X=(x_{ij})_{i,j}$ is 
\[
\Gamma(X) =
\left[
\begin{array}{cc:cc}
	x_{11} & x_{21} & x_{13} & x_{23}\\
	x_{12} & x_{22} & x_{14} & x_{24}\\ \hdashline
	x_{31} & x_{41} & x_{33} & x_{43}\\
	x_{32} & x_{42} & x_{34} & x_{44}
\end{array}
\right].
\]
As stated above, we can express the dual cone $\mathrm{SEP}^*(A;B)$ explicitly. 
Indeed, the combination of \cite{Horodecki:1996} and \cite{Lewenstein.et.al:2000} implies the following proposition.

\begin{prop}\label{prop:Horodecki}
	If $(\dim\cH_A, \dim\cH_B)=(2,2)$, then 
	\[
	\mathrm{SEP}^*(A;B) = \set{T+\Gamma(T') | T,T'\in\cT_+(AB)}.
	\]
\end{prop}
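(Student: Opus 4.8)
The plan is to prove the two inclusions separately: the inclusion $\supseteq$ by a direct positivity computation that does not even need the $(2,2)$ restriction, and the inclusion $\subseteq$ by reducing separability to the positive-partial-transpose (PPT) criterion and then invoking abstract cone duality. Throughout I would use the self-adjointness of the partial transpose with respect to the trace pairing, $\Tr X\Gamma(Y)=\Tr\Gamma(X)Y$, which follows by checking it on product matrices, together with the involutivity $\Gamma^2=\mathrm{id}$.

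For the inclusion $\supseteq$, I would fix $Y=T+\Gamma(T')$ with $T,T'\in\cT_+(AB)$ and verify $\Tr XY\ge0$ for every $X\in\mathrm{SEP}(A;B)$. Since a separable $X$ is in particular positive semi-definite, $\Tr XT\ge0$ because the trace of a product of two positive semi-definite matrices is non-negative. For the second term I would write $\Tr X\Gamma(T')=\Tr\Gamma(X)T'$ and use that $\Gamma$ sends separable matrices to separable ones, as $\Gamma(X_i^A\otimes X_i^B)=X_i^A\otimes (X_i^B)^{\mathrm{T}}$ and the transpose preserves positivity; hence $\Gamma(X)\in\cT_+(AB)$ and $\Tr\Gamma(X)T'\ge0$.

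For the inclusion $\subseteq$, the key structural input is Horodecki's theorem \cite{Horodecki:1996}, which in the $(2,2)$ case asserts that a positive matrix is separable if and only if its partial transpose is positive, i.e.\ $\mathrm{SEP}(A;B)=\cT_+(AB)\cap\Gamma(\cT_+(AB))$. I would then apply the standard identity $(\cK_1\cap\cK_2)^\ast=\overline{\cK_1^\ast+\cK_2^\ast}$ to the two closed convex cones $\cK_1=\cT_+(AB)$ and $\cK_2=\Gamma(\cT_+(AB))$. Using self-duality of $\cT_+(AB)$ and the fact that $\Gamma$ commutes with duality (so that $\Gamma(\cT_+)^\ast=\Gamma(\cT_+)$, again via $\Tr X\Gamma(Y)=\Tr\Gamma(X)Y$ and $\Gamma^2=\mathrm{id}$), this gives $\mathrm{SEP}^\ast(A;B)=\overline{\cT_+(AB)+\Gamma(\cT_+(AB))}$.

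The remaining and most delicate step — the main obstacle — is to remove the closure, i.e.\ to show $\cT_+(AB)+\Gamma(\cT_+(AB))$ is already closed. I would first verify $\cT_+(AB)\cap(-\Gamma(\cT_+(AB)))=\{0\}$: if $S\ge0$ and $S=-\Gamma(T')$ with $T'\ge0$, then $\Gamma(S)=-T'\le0$ while $S\ge0$, and since $\Gamma$ preserves the trace we get $\Tr S=\Tr\Gamma(S)\le0\le\Tr S$, forcing $\Tr S=0$ and hence $S=0$. A sequential argument then closes the sum: given $a_n+b_n\to x$ with $a_n\in\cT_+$ and $b_n\in\Gamma(\cT_+)$, the case $\|a_n\|\to\infty$ is excluded by normalizing and extracting a unit limit direction lying in $\cT_+\cap(-\Gamma(\cT_+))=\{0\}$, a contradiction; so $\{a_n\}$ is bounded and a convergent subsequence yields a decomposition of $x$. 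This closure removal, which is precisely the statement that every entanglement witness in $(2,2)$ is decomposable, is exactly the content for which \cite{Lewenstein.et.al:2000} can be cited, and it is where the low-dimensional hypothesis does its real work beyond Horodecki's equivalence.
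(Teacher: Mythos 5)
Your proof is correct. The paper itself offers no argument for proposition~\ref{prop:Horodecki}: it simply asserts that the statement follows from combining \cite{Horodecki:1996} with \cite{Lewenstein.et.al:2000}, the latter being cited precisely for the fact that every entanglement witness on $2\times2$ is decomposable, i.e.\ of the form $T+\Gamma(T')$. Your route is genuinely different in that it takes only the Horodecki--Peres equivalence $\mathrm{SEP}(A;B)=\cT_+(AB)\cap\Gamma(\cT_+(AB))$ as external input and then \emph{derives} the decomposability of witnesses by the bipolar identity $(\cK_1\cap\cK_2)^\ast=\overline{\cK_1^\ast+\cK_2^\ast}$, self-duality of $\cT_+$, compatibility of $\Gamma$ with the trace pairing, and an explicit closedness argument for the sum of the two cones. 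What this buys is self-containedness: the reader needs one cited theorem instead of two, and the $\supseteq$ inclusion is seen to hold in all dimensions. What it costs is that you must supply the convex-analysis bookkeeping (the duality identity and the closure removal) that the citation to \cite{Lewenstein.et.al:2000} would otherwise absorb; your sequential argument via $\cT_+\cap(-\Gamma(\cT_+))=\{0\}$ handles this correctly.

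One conceptual correction: the closure removal is \emph{not} where the $(2,2)$ hypothesis does its work. Your own argument for closedness of $\cT_+(AB)+\Gamma(\cT_+(AB))$ is dimension-free (the cone $\cT_+\cap(-\Gamma(\cT_+))$ is trivial in every dimension, by the same trace argument), so the decomposable cone is always closed. The low-dimensional hypothesis enters solely through Horodecki's equivalence $\mathrm{SEP}=\mathrm{PPT}$; in higher dimensions that equivalence fails, $\mathrm{SEP}^\ast$ strictly contains the (still closed) decomposable cone, and no amount of convex duality recovers the proposition. This does not affect the validity of your proof, only the attribution in your final sentence.
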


Next, we give an example of two pure states 
that are perfectly distinguishable in SEP despite being non-orthogonal. 
What follows is also a special case of our main result.

\begin{exam}[Perfect discrimination of non-orthogonal pure states in SEP]\label{exam}  
Suppose that two pure states $\rho_1,\rho_2\in\mathrm{SEP}(A;B)$ are given as 
\begin{equation}
	\rho_1 = 
	\begin{bmatrix}
		1 & 0\\
		0 & 0
	\end{bmatrix}
	\otimes
	\begin{bmatrix}
		1 & 0\\
		0 & 0
	\end{bmatrix}
	,\quad \rho_2 = 
	\begin{bmatrix}
		1-\alpha_1 & \beta_1\\
		\beta_1 & \alpha_1
	\end{bmatrix}
	\otimes
	\begin{bmatrix}
		1-\alpha_2 & \beta_2\\
		\beta_2 & \alpha_2
	\end{bmatrix}
	,\label{state:ex}
\end{equation}
where $\alpha_i\in[0,1]$, $\beta_i\ge0$, and $\beta_i^2 = \alpha_i(1-\alpha_i)$ for all $i=1,2$. 
Assume $\alpha_1+\alpha_2=1$ here. 
Then we show that $\rho_1$ and $\rho_2$ are perfectly distinguishable in SEP. 
Let us give a measurement $\{ T_1+\Gamma(T_1), T_2+\Gamma(T_2) \}$ with positive semi-definite matrices $T_1$ and $T_2$. 
Since $T_1$ and $T_2$ are positive semi-definite, 
proposition~\ref{prop:Horodecki} implies that $T_i+\Gamma(T_i)\in\mathrm{SEP}^*(A;B)$ for all $i=1,2$. 
Now, we set the positive semi-definite matrices $T_1$ and $T_2$ as 
\begin{equation*}
	T_1 = \frac{1}{2}
	\begin{bmatrix}
		1 & 0 & 0 & -1\\
		0 & 0 & 0 & 0\\
		0 & 0 & 0 & 0\\
		-1 & 0 & 0 & 1
	\end{bmatrix}
	,\quad
	T_2 = \frac{1}{2}
	\begin{bmatrix}
		0 & 0 & 0 & 0\\
		0 & 1 & 1 & 0\\
		0 & 1 & 1 & 0\\
		0 & 0 & 0 & 0
	\end{bmatrix}.
\end{equation*}
Then $\{ T_1+\Gamma(T_1), T_2+\Gamma(T_2) \}$ is a measurement of SEP 
because $(T_1+\Gamma(T_1)) + (T_2+\Gamma(T_2))=I$. 
The measurement $\{ T_1+\Gamma(T_1), T_2+\Gamma(T_2) \}$ discriminates $\rho_1$ and $\rho_2$ perfectly.
Let us verify it. 
First, the equation $\Tr\rho_1(T_2+\Gamma(T_2))=0$ follows from the definitions. 
Next, note that the assumption $\alpha_1+\alpha_2=1$ implies $\beta_1=\beta_2=\sqrt{\alpha_1\alpha_2}$. 
Since (i) $\Gamma(\rho_2)=\rho_2$ and (ii) $\alpha_1+\alpha_2=1$ ($\beta_1=\beta_2=\sqrt{\alpha_1\alpha_2}$), 
we have 
\[
\Tr\rho_2(T_1+\Gamma(T_1))
\overset{\text{(i)}}{=} 2\Tr\rho_2 T_1
= (1-\alpha_1)(1-\alpha_2) + \alpha_1\alpha_2 - 2\beta_1\beta_2
\overset{\text{(ii)}}{=} 0.
\]
Thus the equation $\Tr\rho_2(T_1+\Gamma(T_1))=0$ also follows. 
Finally, the equation $\Tr\rho_i(T_i+\Gamma(T_i))=1$ follows from $(T_1+\Gamma(T_1)) + (T_2+\Gamma(T_2))=I$ and $\Tr\rho_i(T_j+\Gamma(T_j))=0$ for all $i\neq j$. 
Therefore, the measurement $\{ T_1+\Gamma(T_1), T_2+\Gamma(T_2) \}$ discriminates $\rho_1$ and $\rho_2$ perfectly. 
Here, note that $\rho_1$ and $\rho_2$ are not orthogonal if $\alpha_1,\alpha_2\not=1$. 
Thus perfect discrimination of two pure states in SEP is possible 
even when the two states are not orthogonal.

\begin{figure}[t]
	\begin{minipage}{0.5\hsize}
		\centering
		\includegraphics[scale=1.2]{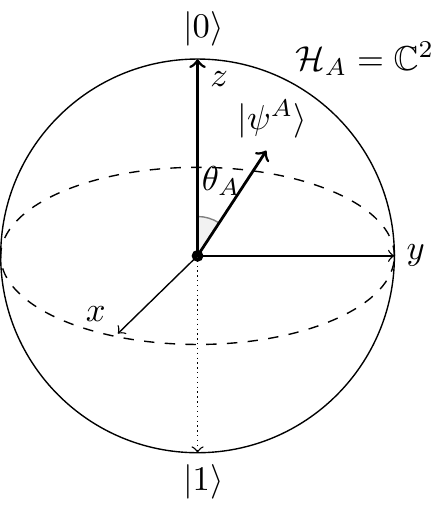}
	\end{minipage}
	{\huge $\otimes$}
	\begin{minipage}{0.5\hsize}
		\centering
		\includegraphics[scale=1.2]{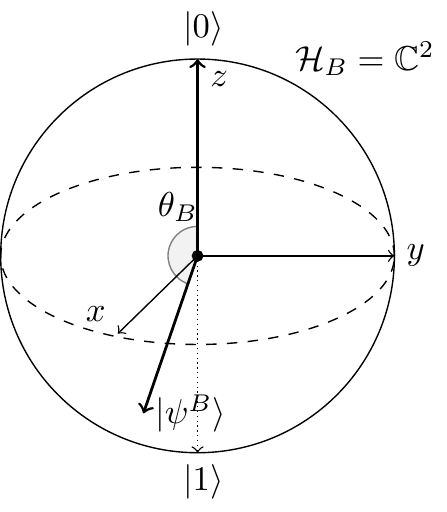}
	\end{minipage}
	\caption{The figure illustrates the two unit vectors $\ket{0}\otimes\ket{0}$ and $\ket{\psi^A}\otimes\ket{\psi^B}$ 
	in two qubits by using the two Bloch spheres. 
	Example~\ref{exam} shows that the two pure states 
	$\ketbra{0}{0}\otimes\ketbra{0}{0}$ and $\ketbra{\psi^A}{\psi^A}\otimes\ketbra{\psi^B}{\psi^B}$ 
	are perfectly distinguishable in SEP if $\theta_A+\theta_B=\pi$.}
	\label{fig:exam}
\end{figure}

Figure~\ref{fig:exam} illustrates this example by using the two Bloch spheres. 
Let $\{\ket{0},\ket{1}\}$ be an orthonormal basis of a qubit. 
Then the state $\rho_1$ can be expressed as $\rho_1=\ketbra{0}{0}\otimes\ketbra{0}{0}$. 
Since $\rho_2$ is also a separable pure state, 
there exist two unit vectors $\ket{\psi^A}$ and $\ket{\psi^B}$ such that 
$\rho_2=\ketbra{\psi^A}{\psi^A}\otimes\ketbra{\psi^B}{\psi^B}$. 
The condition $\alpha_1+\alpha_2=1$ given above corresponds to 
the condition $\theta_A+\theta_B=\pi$ of the angles in figure~\ref{fig:exam}.
\end{exam}

Example~\ref{exam} gives a sufficient condition of perfect discrimination, 
but it does not give a necessary condition. 
Thus we give the following theorem as a necessary and sufficient condition 
for two pure states to be discriminated perfectly.

\begin{theo}\label{theo:Main}
	Two pure states $\rho_1=\rho_1^A\otimes\rho_1^B$ and $\rho_2=\rho_2^A\otimes\rho_2^B$ are
	perfectly distinguishable in SEP if and only if 
	\[
	\Tr\rho_1^A\rho_2^A + \Tr \rho_1^B\rho_2^B \le 1.
	\]
\end{theo}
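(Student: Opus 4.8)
The plan is to reduce perfect discrimination to a single operator condition and then to a two‑qubit Bloch‑sphere optimization. Since $M_1+M_2=I$ and $\Tr\rho_i=1$, the four equalities $\Tr M_j\rho_i=\delta_{ij}$ collapse to the two conditions $\Tr M_1\rho_1=1$ and $\Tr M_1\rho_2=0$; so the task is to find $M_1$ with $M_1,\ I-M_1\in\mathrm{SEP}^*(A;B)$ realizing these two values. Because $\rho_1,\rho_2$ are pure product states, they live on $V_A\otimes V_B$ with $V_A=\mathrm{span}\{\psi_1^A,\psi_2^A\}$ and $V_B=\mathrm{span}\{\psi_1^B,\psi_2^B\}$, each of dimension at most two. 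Compressing by $P_A\otimes P_B$ (projections onto $V_A,V_B$) maps $\mathrm{SEP}^*$ elements to $\mathrm{SEP}^*$ elements of the subsystem and preserves the two trace values, while any block solution extends by zero to a full solution; hence it suffices to treat the $(2,2)$‑dimensional case, where proposition~\ref{prop:Horodecki} applies. The degenerate cases $\Tr\rho_1^A\rho_2^A=1$ or $\Tr\rho_1^B\rho_2^B=1$ force orthogonality in a single factor and are checked directly.

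In the $(2,2)$ case I would pass to Bloch coordinates. With $\sigma^A(\vec r)=\tfrac12(I+\vec r\cdot\vec\sigma)$ and likewise $\sigma^B(\vec s)$, the function $F(\vec r,\vec s):=\Tr[M_1(\sigma^A(\vec r)\otimes\sigma^B(\vec s))]$ is affine in each argument, and $M_1\mapsto F$ is a linear bijection onto bi‑affine functions. The key dictionary is: $M_1,\ I-M_1\in\mathrm{SEP}^*(A;B)$ iff $0\le F\le1$ on the product of the two Bloch balls, since $\mathrm{SEP}$ is generated by product states. Perfect discrimination then reads $F(\vec r_1,\vec s_1)=1$ and $F(\vec r_2,\vec s_2)=0$, where $\vec r_i,\vec s_i$ are the Bloch vectors of $\rho_i^A,\rho_i^B$, and the overlaps enter through $\vec r_1\cdot\vec r_2=2\Tr\rho_1^A\rho_2^A-1$ and $\vec s_1\cdot\vec s_2=2\Tr\rho_1^B\rho_2^B-1$. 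Thus the theorem is equivalent to: a bi‑affine $F$ valued in $[0,1]$ with these two prescribed extreme values exists iff $\vec r_1\cdot\vec r_2+\vec s_1\cdot\vec s_2\le0$.

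For necessity I would use that $F$ attains its global maximum $1$ at the pure point $(\vec r_1,\vec s_1)$ (because $I-M_1\in\mathrm{SEP}^*$) and its global minimum $0$ at $(\vec r_2,\vec s_2)$. First‑order optimality over the product of balls gives, with multipliers $\lambda_1,\mu_1\ge0$ at the maximum and $\lambda_2,\mu_2\ge0$ at the minimum, the relations $\nabla_{\vec r}F=\lambda_1\vec r_1,\ \nabla_{\vec s}F=\mu_1\vec s_1$ at $(\vec r_1,\vec s_1)$ and $\nabla_{\vec r}F=-\lambda_2\vec r_2,\ \nabla_{\vec s}F=-\mu_2\vec s_2$ at $(\vec r_2,\vec s_2)$. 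Using affineness in each variable, $F(-\vec r_1,\vec s_1)=1-2\lambda_1\ge0$ and $F(\vec r_1,-\vec s_1)=1-2\mu_1\ge0$ give $\lambda_1,\mu_1\le\tfrac12$, while $F(-\vec r_2,\vec s_2)=2\lambda_2\le1$ and $F(\vec r_2,-\vec s_2)=2\mu_2\le1$ give $\lambda_2,\mu_2\le\tfrac12$. Finally, evaluating $F$ at the cross points $(\vec r_1,\vec s_2)$ and $(\vec r_2,\vec s_1)$ and inserting the gradient relations yields, with $u=1-\vec r_1\cdot\vec r_2\ge0$ and $v=1-\vec s_1\cdot\vec s_2\ge0$, the two identities $\lambda_2u+\mu_1v=1$ and $\lambda_1u+\mu_2v=1$, whose sum is $(\lambda_1+\lambda_2)u+(\mu_1+\mu_2)v=2$. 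Since each bracket is $\le1$, this forces $u+v\ge2$, i.e.\ $\vec r_1\cdot\vec r_2+\vec s_1\cdot\vec s_2\le0$, which is exactly $\Tr\rho_1^A\rho_2^A+\Tr\rho_1^B\rho_2^B\le1$.

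For sufficiency I would exhibit such an $F$, equivalently $M_1=T_1+\Gamma(T_1')$ with $T_1,T_1'\ge0$ via proposition~\ref{prop:Horodecki}. On the boundary $\Tr\rho_1^A\rho_2^A+\Tr\rho_1^B\rho_2^B=1$ this is precisely example~\ref{exam}, whose witness is $M_1=\tfrac12(I+Z\otimes Z-X\otimes X)$ in the normalized frame $\rho_1=\ketbra{00}{00}$, giving $F=\tfrac12(1+r_3s_3-r_1s_1)$ with the prescribed values and range $[0,1]$ by Cauchy--Schwarz. For the strict inequality the same $F$ overshoots at $(\vec r_2,\vec s_2)$ (its zero locus is the line $\vec r_1\cdot\vec r_2+\vec s_1\cdot\vec s_2=0$ rather than a point through $(\vec r_2,\vec s_2)$), so I would instead determine $F$ from the stationarity‑plus‑value equations of the previous paragraph, which form a solvable linear system for the bilinear and linear coefficients once the two extreme points and unit multipliers are fixed, and then verify $0\le F\le1$ on all product states. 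I expect this last step to be the main obstacle: positivity on the product of Bloch balls is a genuine bilinear constraint, not a formal identity, and showing the constructed witness stays positive exactly when $u+v\ge2$ (with a genuinely non‑positive‑semidefinite $M_1$, as must occur since no quantum measurement discriminates non‑orthogonal states) is where the hypothesis $\Tr\rho_1^A\rho_2^A+\Tr\rho_1^B\rho_2^B\le1$ must be used quantitatively.
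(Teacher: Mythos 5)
Your necessity argument is correct, complete, and takes a genuinely different route from the paper's. The paper (section~\ref{sect4}, theorem~\ref{theo:Nec}) first uses proposition~\ref{prop:Horodecki} to symmetrize an arbitrary discriminating measurement into the form $\{T_1+\Gamma(T_1),\,T_2+\Gamma(T_2)\}$, derives the constraint $T+\Gamma(T)=I$ for $T=T_1+T_2$, parametrizes $T$ explicitly, and computes $\Tr\rho_1\rho_2=2(\Re z)\beta_1\beta_2(2p_1-1)(2p_2-1)$ with $\abs{z}\le1/2$. You instead work directly from the definition of the dual cone: $M_1,\,I-M_1\in\mathrm{SEP}^*(A;B)$ iff the bi-affine function $F$ lies in $[0,1]$ on the product of the two Bloch balls, and then first-order optimality at the global maximum $(\vec r_1,\vec s_1)$ and minimum $(\vec r_2,\vec s_2)$, combined with the cross-point evaluations of $F(\vec r_1,\vec s_2)$ and $F(\vec r_2,\vec s_1)$ from both extreme points, gives $\lambda_2u+\mu_1v=1$ and $\lambda_1u+\mu_2v=1$ with $u=1-\vec r_1\cdot\vec r_2$, $v=1-\vec s_1\cdot\vec s_2$; I verified these identities and the bounds $\lambda_i,\mu_i\le1/2$, so the conclusion $u+v\ge2$, i.e.\ $\Tr\rho_1^A\rho_2^A+\Tr\rho_1^B\rho_2^B\le1$, is sound. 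This avoids proposition~\ref{prop:Horodecki} entirely (only the definition of $\mathrm{SEP}^*$ is used) and is arguably more transparent; the trade-off is that it relies on purity ($|\vec r_i|=|\vec s_i|=1$), whereas the paper's theorem~\ref{theo:Nec} also covers product states whose first factors are mixed and diagonal, yielding the extra conclusion \eqref{eq:mix}.

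The sufficiency half, however, has a genuine gap. You exhibit a valid witness only in the boundary case $\Tr\rho_1^A\rho_2^A+\Tr\rho_1^B\rho_2^B=1$ (your $M_1=\tfrac12(I+Z\otimes Z-X\otimes X)$ is exactly the measurement of example~\ref{exam}). For the strict inequality you propose to solve a linear system for the coefficients of $F$ and then ``verify $0\le F\le1$ on all product states,'' and you yourself flag this verification as the main obstacle --- it is, and it is left undone. That step is precisely the content of the paper's section~\ref{sect3}: theorem~\ref{theo:Suf} writes down explicit matrices $T_1,T_2$ depending on $\gamma=\alpha_1+\alpha_2>1$, so that membership of $T_i+\Gamma(T_i)$ in $\mathrm{SEP}^*(A;B)$ reduces, via proposition~\ref{prop:Horodecki}, to positive semi-definiteness of $T_1$ and $T_2$, which is then certified by principal-minor computations (proposition~\ref{prop:PosCrite}) together with $T_1\ket{\psi^A\otimes\psi^B}=0$. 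This is also where the Horodecki-type characterization earns its keep: since the required $M_1$ cannot be positive semi-definite (the states are non-orthogonal), one needs some certificate of membership in the dual cone, and nonnegativity of a bi-affine function on a product of balls admits no easy direct check; the decomposition $T+\Gamma(T')$ converts it into two ordinary PSD conditions. Until you carry out such a construction and verification, the ``if'' direction of theorem~\ref{theo:Main} remains unproven in your proposal.
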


Here, let us compare the necessary and sufficient condition in SEP with that in quantum theory. 
In quantum theory, the condition $(\Tr\rho_1^A\rho_2^A)(\Tr\rho_1^B\rho_2^B)=0$ 
is necessary and sufficient to discriminate the two state in theorem~\ref{theo:Main} perfectly. 
Thus we can find that measurements of SEP improve the performance of state discrimination. 
The sufficiency of theorem~\ref{theo:Main} is proved in section~\ref{sect3} 
and the necessity of theorem~\ref{theo:Main} is proved in section~\ref{sect4}.

Measurements of SEP improve the performance of multiple-copy state discrimination more dramatically. 
To see this fact, let us consider perfect discrimination of $2n$-copies 
$\rho_1^{\otimes2n}$ and $\rho_2^{\otimes2n}$ of pure states. 
Then $\rho_i^{\otimes2n} = \rho_i^{\otimes n}\otimes\rho_i^{\otimes n}$ 
is a separable pure state on a bipartite system for $i=1,2$. 
Thus $\rho_1^{\otimes2n}$ and $\rho_2^{\otimes2n}$ are perfectly distinguishable in SEP 
if 
\[
2(\Tr \rho_1\rho_2)^n = \Tr \rho_1^{\otimes n}\rho_2^{\otimes n} + \Tr \rho_1^{\otimes n}\rho_2^{\otimes n} \le 1.
\] 
This inequality always holds for a sufficiently large $n$ if $\rho_1\neq\rho_2$. 
Therefore, $\rho_1^{\otimes2n}$ and $\rho_2^{\otimes2n}$ are perfectly distinguishable in SEP. 
Of course, such a measurement to realize the above perfect discrimination is impossible in quantum theory.

Next, we discuss how many states are simultaneously and perfectly distinguishable in SEP. 
That is, our interest is the \textit{capacity} $N_{\mathrm{SEP}}$ defined as 
the maximum number of simultaneously and perfectly distinguishable states in SEP: 
\[
N_{\mathrm{SEP}}
\coloneqq \max\set{n\in\mathbb{N} | \exists\{\rho_i\}_{i=1}^n,\ \exists\{M_j\}_{j=1}^n\ \text{s.t.}\ \Tr\rho_iM_j=\delta_{ij}},
\]
where $\{\rho_i\}_{i=1}^n$ and $\{M_j\}_{j=1}^n$ are a family of states in SEP and a measurement of SEP, respectively. 
As stated in the previous paragraph, 
the performance of state discrimination in SEP is higher than that in quantum theory. 
Hence one might guess that the capacity in SEP is greater than that in quantum theory. 
However, the following proposition shows that this is not the case.

\begin{prop}\label{theo:capa}
	The capacity $N_{\mathrm{SEP}}$ is $\dim(\cH_A\otimes\cH_B)$.
\end{prop}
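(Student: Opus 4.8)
The plan is to establish the two inequalities $N_{\mathrm{SEP}}\ge\dim(\cH_A\otimes\cH_B)$ and $N_{\mathrm{SEP}}\le\dim(\cH_A\otimes\cH_B)$ separately; throughout write $d_A=\dim\cH_A$ and $d_B=\dim\cH_B$, so that $\dim(\cH_A\otimes\cH_B)=d_Ad_B$. For the lower bound I would exhibit $d_Ad_B$ simultaneously and perfectly distinguishable states explicitly. Fix orthonormal bases $\{\ket{a_k}\}_{k=1}^{d_A}$ and $\{\ket{b_l}\}_{l=1}^{d_B}$ of $\cH_A$ and $\cH_B$, and set $\rho_{kl}=\ketbra{a_k}{a_k}\otimes\ketbra{b_l}{b_l}$ together with $M_{kl}=\ketbra{a_k}{a_k}\otimes\ketbra{b_l}{b_l}$. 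Each $\rho_{kl}$ is a separable pure state, hence a state of SEP, and each $M_{kl}$ lies in $\mathrm{SEP}^*(A;B)$ since for any separable $X=\sum_i X_i^A\otimes X_i^B$ one has $\Tr[(\ketbra{a_k}{a_k}\otimes\ketbra{b_l}{b_l})X]=\sum_i\braket{a_k|X_i^A|a_k}\braket{b_l|X_i^B|b_l}\ge0$. Because $\sum_{k,l}M_{kl}=I$, the family $\{M_{kl}\}$ is a measurement of SEP, and $\Tr\rho_{kl}M_{k'l'}=\abs{\braket{a_k|a_{k'}}}^2\abs{\braket{b_l|b_{l'}}}^2=\delta_{kk'}\delta_{ll'}$ shows that $\{\rho_{kl}\}$ is perfectly distinguishable. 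Hence $N_{\mathrm{SEP}}\ge d_Ad_B$.

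The upper bound is the main obstacle, and for it I would use the composite structure of SEP together with \cite[lemma~24]{Muller.et.al:2012}. As recorded in the Remark, SEP is the composite system (the minimal tensor product) of the two quantum subsystems $(\cT(A),\cT_+(A),I)$ and $(\cT(B),\cT_+(B),I)$; and, as noted in the introduction, the capacity in quantum theory equals the dimension of the quantum system, so these subsystems have capacities $d_A$ and $d_B$. Lemma~24 of \cite{Muller.et.al:2012} bounds the capacity of a composite GPT by the product of the subsystem capacities, which yields $N_{\mathrm{SEP}}\le d_Ad_B$. The delicate point here is that the measurement class of SEP is strictly larger than that of quantum theory on $\cH_A\otimes\cH_B$, so this enlargement cannot be controlled by the quantum orthogonality argument; indeed Theorem~\ref{theo:Main} shows that genuinely non-orthogonal pairs become distinguishable. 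What keeps the count from rising is that the available states remain separable, and Lemma~24 packages exactly this into the multiplicativity of the capacity over the composite.

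Combining the two bounds gives $N_{\mathrm{SEP}}=d_Ad_B=\dim(\cH_A\otimes\cH_B)$. I expect the only substantive step to be checking that the hypotheses of Lemma~24 genuinely apply to the minimal tensor product SEP --- in particular that SEP qualifies as a legitimate composite of the two quantum subsystems in the sense required there --- whereas the lower bound is a direct construction and the subsystem capacities are the standard quantum value $\dim\cH$.
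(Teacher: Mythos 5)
Your route is the same as the paper's: the result is reduced to \cite[lemma~24 (iii)]{Muller.et.al:2012}. Your lower-bound construction (product basis states $\rho_{kl}=\ketbra{a_k}{a_k}\otimes\ketbra{b_l}{b_l}$ together with the product-basis measurement) is correct, but it is also redundant in this approach, because lemma~24 (iii) is an \emph{equality} $N_{AB}=N_A N_B$ under its hypotheses, not merely an upper bound. This also means your description of the lemma as something that ``bounds the capacity of a composite GPT by the product of the subsystem capacities'' is slightly off: it is not an unconditional bound valid for any composite; it is a conditional statement, and the conditions are where all the work lies.

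That is exactly the genuine gap: you invoke lemma~24 for the upper bound but never verify its hypotheses, and that verification \emph{is} the paper's proof. What must be checked is (a) \emph{transitivity} of each subsystem and of the composite, i.e., the existence of a group of state-space symmetries acting transitively on pure states, and (b) the \emph{maximally mixed condition}, i.e., that Haar averaging over each such group sends every pure state to $I/D$, which guarantees $\mu_A\otimes\mu_B=\mu_{AB}$. For the quantum subsystems, (a) holds via the unitary group $\mathcal{G}_{\mathrm{QT},\,X}$. For SEP, the paper exhibits the local-unitary group $\mathcal{G}_{\mathrm{SEP},\,A;B}$, consisting of the maps $\rho\mapsto(U_A^\dag\otimes U_B^\dag)\rho(U_A\otimes U_B)$; this acts transitively on the pure states of SEP because those pure states are precisely the product pure states $\rho^A\otimes\rho^B$ (the extreme points of the separable cone), and local unitaries carry any product pure state to any other. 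Note that this check concerns only the \emph{state} cone of SEP, which is why the enlarged measurement class---the feature you correctly flag as the threat to the upper bound---never needs to be controlled directly: lemma~24 does not see it through these hypotheses. As written, your appeal to the lemma assumes exactly what remains to be shown (``I expect the only substantive step to be checking that the hypotheses\ldots apply''), so the proposal stops one step short of a proof; filling in (a) and (b) as above completes it and makes your explicit lower bound unnecessary.
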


Since the capacity in quantum theory is equal to the dimension of a quantum system,
proposition~\ref{theo:capa} asserts that $\mathrm{SEP}$ has the same capacity as quantum theory.
Actually, proposition~\ref{theo:capa} follows from \cite[lemma~24 (iii)]{Muller.et.al:2012} 
which is a more general statement on capacities
of composite systems of two quantum subsystems.\footnote{The statement \cite[theorem~3]{Masanes:2011} 
is also the same statement on capacities as \cite[lemma~24 (iii)]{Muller.et.al:2012}. 
However, it assumes the additional requirement \cite[requirement~3]{Masanes:2011} 
that all systems of the same type with the same capacity are equivalent up to invertible linear transformation,
and it is not clear that SEP satisfies the requirement.
Therefore, we do not use \cite[theorem~3]{Masanes:2011} here.}
To use \cite[lemma~24 (iii)]{Muller.et.al:2012}, 
we need to verify the \textit{transitivity} of quantum theory on $\cH_A$ ($\cH_B$)
and SEP on $\cH_A\otimes \cH_B$.
When we give a group $\mathcal{G}$ of transformations mapping states to states in quantum theory (resp.\ SEP), 
\textit{transitivity} asserts that, for any pair of two pure states $\rho_1$ and $\rho_2$ in quantum theory (resp.\ SEP), 
there exists a transformation $T\in\mathcal{G}$ such that $\rho_1=T\rho_2$.
For $X=A,B$, quantum theory on $\cH_X$ is transitive because the group
\[
\mathcal{G}_{\mathrm{QT},\,X} \coloneqq \set{\rho \mapsto U^\dag \rho U | \text{$U$ unitary on $\cH_X$}}
\]
satisfies the assertion of transitivity. 
Also, SEP on the composite system $\cH_A\otimes \cH_B$ is also transitive because the group 
\[
\mathcal{G}_{\mathrm{SEP},\,A;B}
\coloneqq \set{\rho \mapsto (U_A^\dag\otimes U_B^\dag) \rho (U_A\otimes U_B) | 
\text{$U_A$, $U_B$ unitary on $\cH_A$ and $\cH_B$}}
\]
satisfies the assertion of transitivity.
Additionally, we need the \textit{maximally mixed condition}:
for each system, the average $\int T(\rho)\,dT$ is equal to the state $I/D$  for any pure state $\rho$,
where $dT$ is the Haar measure on $\mathcal{G}$ and $D$ is the dimension of the system.
To use \cite[lemma~24 (iii)]{Muller.et.al:2012}, 
the groups $\mathcal{G}_{\mathrm{QT},\,A}$, $\mathcal{G}_{\mathrm{QT},\,B}$, and $\mathcal{G}_{\mathrm{SEP},\,A;B}$
need to satisfy the maximally mixed condition.
Fortunately, this is indeed the case.
Therefore, SEP on the composite system $\cH_A\otimes \cH_B$
satisfies the assumption of \cite[lemma~24 (iii)]{Muller.et.al:2012} and thus 
proposition~\ref{theo:capa} follows.

Table~\ref{table} summarizes the necessary and sufficient conditions of perfect discrimination and the capacities in quantum theory and SEP. 
The performance of perfect discrimination in SEP is better than that in quantum theory 
but the capacity in SEP is equal to that in quantum theory.

\begin{table}[t]
	\centering
	\caption{Necessary and sufficient conditions of perfect discrimination of two pure states, and capacities.}\label{table}
	\begin{tabular}{|c||c|c|}\hline
		GPTs&SEP&Quantum theory\\ \hline
		Necessary and sufficient&\multirow{3}{*}{$\Tr\rho_1^A\rho_2^A+\Tr\rho_1^B\rho_2^B\leq1$}&\multirow{3}{*}{$(\Tr\rho_1^A\rho_2^A)(\Tr\rho_1^B\rho_2^B)=0$}\\
		condition&   &    \\
		of perfect discrimination&   &    \\ \hline
		Capacity&$\dim(\cH_A\otimes \cH_B)$&$\dim(\cH_A\otimes \cH_B)$\\ \hline
	\end{tabular}
\end{table}

\section{Proof of the sufficiency of theorem~\ref{theo:Main}}\label{sect3}

In this section, we prove the sufficiency of theorem~\ref{theo:Main}. 
Any pair of two pure states $\rho_1$ and $\rho_2$ in SEP can be expressed as 
$\rho_i = \ketbra{u_i^A}{u_i^A}\otimes\ketbra{u_i^B}{u_i^B}$ 
by using unit vectors $u_1^A,u_2^A\in\cH_A$ and $u_1^B,u_2^B\in\cH_B$. 
Thus there exist two-dimensional subspaces 
$\cH_{A'}\subset\cH_A$ and $\cH_{B'}\subset\cH_B$ 
such that $\rho_1,\rho_2\in\mathrm{SEP}(A';B')$. 
Hence, for all integers $d_A,d_B\ge2$, 
the $(d_A,d_B)$-dimensional case can be reduced to the $(2,2)$-dimensional case. 
Without loss of generality, the above states $\rho_1$ and $\rho_2$ can be written as \eqref{state:ex}. 
After all, it suffices to prove the following theorem.

\begin{theo}[Sufficiency of theorem~\ref{theo:Main}]\label{theo:Suf}
	Assume that two pure states in SEP are given as \eqref{state:ex}. 
	If $\alpha_1+\alpha_2 \ge 1$, then $\rho_1$ and $\rho_2$ are perfectly distinguishable in SEP.
\end{theo}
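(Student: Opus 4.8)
The plan is to construct an explicit two-outcome SEP measurement $\{M_1,M_2\}$ by hand. Write $\rho_2=\ketbra{\psi_2}{\psi_2}$ with $\ket{\psi_2}=\ket{\psi^A}\otimes\ket{\psi^B}$, where $\ket{\psi^A}=\sqrt{1-\alpha_1}\ket0+\sqrt{\alpha_1}\ket1$ and $\ket{\psi^B}=\sqrt{1-\alpha_2}\ket0+\sqrt{\alpha_2}\ket1$, so that $\rho_2$ coincides with \eqref{state:ex}. Since we work in the $(2,2)$-dimensional case, Proposition~\ref{prop:Horodecki} is available. I take $M_1=T_1+\Gamma(T_1)$ with the rank-one choice $T_1=\ketbra{\xi}{\xi}$, where $\ket{\xi}=\tfrac{1}{\sqrt2}\left(\ket{00}-c\,\ket{11}\right)$ and $c=\sqrt{(1-\alpha_1)(1-\alpha_2)/(\alpha_1\alpha_2)}\ge0$, and I set $M_2=I-M_1$. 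As $T_1\ge0$, Proposition~\ref{prop:Horodecki} gives $M_1\in\mathrm{SEP}^*(A;B)$ at once.

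Next I would verify the discrimination conditions. Because the tensor factors of $\rho_1,\rho_2$ in \eqref{state:ex} are real symmetric, one has $\Gamma(\rho_i)=\rho_i$, so $\Tr\rho_2 M_1=2\,\Tr\rho_2 T_1=2\,\abs{\braket{\xi|\psi_2}}^2$ and $(M_1)_{11}=2\,(T_1)_{11}=2\,\abs{\braket{00|\xi}}^2$. The defining value of $c$ makes $\braket{\xi|\psi_2}=\tfrac{1}{\sqrt2}\left(\sqrt{(1-\alpha_1)(1-\alpha_2)}-c\sqrt{\alpha_1\alpha_2}\right)=0$, whence $\Tr\rho_2 M_1=0$; and $\abs{\braket{00|\xi}}^2=\tfrac12$ gives $(M_1)_{11}=1$, i.e.\ $\Tr\rho_1 M_2=(M_2)_{11}=0$. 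Together with $M_1+M_2=I$ this yields $\Tr\rho_iM_j=\delta_{ij}$, so the pair is perfectly discriminated once $\{M_1,M_2\}$ is shown to be a legitimate measurement.

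The only remaining point, and the crux, is $M_2=I-M_1\in\mathrm{SEP}^*(A;B)$. By the definition of the dual cone this is equivalent to block positivity, $\langle ab|M_2|ab\rangle\ge0$ for every product vector $\ket a\otimes\ket b$, i.e.\ $\langle ab|M_1|ab\rangle\le1$ for every unit product vector. Writing $\ket a=a_0\ket0+a_1\ket1$, $\ket b=b_0\ket0+b_1\ket1$ and using $\langle ab|\Gamma(\ketbra{\xi}{\xi})|ab\rangle=\abs{\braket{\xi|a\bar b}}^2$ (with $\ket{\bar b}$ the entrywise conjugate of $\ket b$), I would expand $\langle ab|M_1|ab\rangle=\abs{\braket{\xi|ab}}^2+\abs{\braket{\xi|a\bar b}}^2$, set $p=\abs{a_0}^2$ and $q=\abs{b_0}^2$, and maximize over the phases of $\bar a_0a_1$ and $\bar b_0b_1$; the cross terms combine into a single factor proportional to $\cos\phi_a\cos\phi_b$, so the worst case reduces to the bound $\left(\sqrt{pq}+c\sqrt{(1-p)(1-q)}\right)^2\le1$.

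Finally I would close the estimate by Cauchy--Schwarz: $\sqrt{pq}+\sqrt{(1-p)(1-q)}\le\sqrt{(p+(1-p))(q+(1-q))}=1$, so $c\le1$ forces $\sqrt{pq}+c\sqrt{(1-p)(1-q)}\le1$. The hypothesis then closes the loop, since $c\le1\iff(1-\alpha_1)(1-\alpha_2)\le\alpha_1\alpha_2\iff\alpha_1+\alpha_2\ge1$; the threshold $c=1$ recovers $\ket\xi=\tfrac{1}{\sqrt2}(\ket{00}-\ket{11})$ and the boundary $\alpha_1+\alpha_2=1$ of Example~\ref{exam}. I expect the block-positivity step to be the main obstacle, but it collapses cleanly to Cauchy--Schwarz once the phases are eliminated. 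The degenerate cases $\alpha_1\in\{0,1\}$ or $\alpha_2\in\{0,1\}$, where $c$ is undefined, force $\Tr\rho_1^A\rho_2^A=0$ or $\Tr\rho_1^B\rho_2^B=0$ under the hypothesis, so there $\rho_1$ and $\rho_2$ are orthogonal and trivially distinguishable; these I would dispatch separately.
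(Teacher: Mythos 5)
Your proof is correct, but it takes a genuinely different route from the paper's at the decisive step. The paper also builds a two-outcome measurement of the form $\{T_1+\Gamma(T_1),\,T_2+\Gamma(T_2)\}$, but it writes down \emph{both} effects explicitly in the Horodecki--Lewenstein form of proposition~\ref{prop:Horodecki} (its $T_1$ is a rank-$\le3$ matrix annihilating $\ket{\psi^A\otimes\psi^B}$, not your rank-one $\ketbra{\xi}{\xi}$), and certifies membership in $\mathrm{SEP}^*(A;B)$ by proving positive semi-definiteness of $T_1$ and $T_2$ through leading principal minors (proposition~\ref{prop:PosCrite}); the hypothesis $\gamma=\alpha_1+\alpha_2>1$ enters through the signs of those determinants. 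You instead take a rank-one $T_1=\ketbra{\xi}{\xi}$ with $\ket{\xi}\propto\ket{00}-c\ket{11}$, set $M_2=I-M_1$, and certify $M_2\in\mathrm{SEP}^*(A;B)$ directly from the definition of the dual cone as block positivity on product vectors, which after eliminating phases (your identity $\Re(uv)+\Re(u\bar v)=2\,\Re u\,\Re v$ is the right one) collapses to $\sqrt{pq}+c\sqrt{(1-p)(1-q)}\le1$ and hence to Cauchy--Schwarz plus $c\le1\iff\alpha_1+\alpha_2\ge1$. What each buys: the paper's route keeps every effect in the explicit $T+\Gamma(T')$ normal form, at the cost of heavier determinant computations; yours is more elementary and makes the threshold $\alpha_1+\alpha_2\ge1$ appear transparently from Cauchy--Schwarz, and it never needs proposition~\ref{prop:Horodecki} for the nontrivial effect (for $M_1$ the inclusion is immediate since $T_1\ge0$ and $\Gamma$ preserves separable states). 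Two small points of hygiene: only $\alpha_1\alpha_2=0$ makes $c$ undefined (at $\alpha_i=1$ one simply gets $c=0$ and the main argument applies), though your orthogonality fallback covers all boundary cases anyway; and the phase-maximization step deserves one written line, but it is correct as claimed.
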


We prove theorem~\ref{theo:Suf} by giving measurements of SEP explicitly. 
It is the most difficult point in the proof of theorem~\ref{theo:Suf} 
to prove that elements of measurements belong to $\mathrm{SEP}^\ast(A;B)$. 
Since proposition~\ref{prop:Horodecki} can be applied to the $(2,2)$-dimensional case, 
elements of measurements are give as the form in proposition~\ref{prop:Horodecki}. 
Thus the proof of theorem~\ref{theo:Suf} requires us to prove that certain matrices are positive semi-definite. 
To prove positive semi-definiteness, we need to investigate principal submatrices. 
For a matrix $X=(x_{ij})_{1\le i,j\le d}$ and a set $J\subset\{1,2,\ldots,d\}$, 
the principal submatrix $X(J)$ is defined as the matrix $(x_{ij})_{i,j\in J}$. 
As a criterion of positive semi-definiteness, 
the following proposition is well-known.

\begin{prop}[Section~7 in \cite{Horn.Jonson.Book:2012}]\label{prop:PosCrite}
	Let $X$ be a Hermitian matrix whose rank is $r$. 
	If any integer $1\le k\le r$ satisfies $\det X(1,2,\ldots,k)>0$, 
	then $X$ is positive semi-definite.
\end{prop}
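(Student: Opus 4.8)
The plan is to use the rank hypothesis to reduce positive semi-definiteness of the whole matrix to positive definiteness of its leading $r\times r$ block, the passage between the two being governed by a Schur-complement computation. The guiding observation is that positivity of the first $r$ leading principal minors pins down only the top-left corner of $X$; it is the constraint $\operatorname{rank}X=r$ that rigidifies the remaining blocks so that positivity propagates to all of $X$.

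First I would isolate the leading principal submatrix $A \coloneqq X(1,2,\ldots,r)$. Its own leading principal minors are exactly $\det X(1,2,\ldots,k)$ for $1\le k\le r$, all positive by hypothesis, so Sylvester's criterion for positive definiteness shows that $A$ is positive definite; in particular $A$ is invertible with $\operatorname{rank}A=r$. If $r$ equals the order $d$ of $X$ we are already done, so I may assume $r<d$ and write
\[
X = \begin{pmatrix} A & B \\ B^\ast & C \end{pmatrix},
\]
where $B$ has size $r\times(d-r)$ and $C$ is Hermitian of order $d-r$.

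Because $A$ is invertible, I would introduce the Schur complement $S \coloneqq C - B^\ast A^{-1} B$ and record the block factorization
\[
X = \begin{pmatrix} I & 0 \\ B^\ast A^{-1} & I \end{pmatrix}
\begin{pmatrix} A & 0 \\ 0 & S \end{pmatrix}
\begin{pmatrix} I & A^{-1} B \\ 0 & I \end{pmatrix},
\]
whose outer factors are invertible and mutually adjoint (here I use that $A^{-1}$ is Hermitian). Multiplying by the invertible triangular factors does not change rank, so $\operatorname{rank}X = \operatorname{rank}A + \operatorname{rank}S = r + \operatorname{rank}S$; the hypothesis $\operatorname{rank}X = r$ then forces $S=0$. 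With $S=0$ the middle factor is $\begin{pmatrix} A & 0 \\ 0 & 0 \end{pmatrix}$, which is positive semi-definite since $A$ is, and congruence by the invertible outer factors preserves positive semi-definiteness; hence $X$ is positive semi-definite.

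I expect the one substantive point to be exactly this use of the rank hypothesis to annihilate the Schur complement: the positivity of the leading minors says nothing about $B$ or $C$ on its own, and it is only the constraint $\operatorname{rank}X=r$ that pins $S$ to zero and thereby forces positivity of the full matrix. The remaining ingredients — Sylvester's criterion for the $r\times r$ block and the rank/positivity behaviour of congruence transformations — are standard, so the proof amounts to assembling them around the displayed factorization.
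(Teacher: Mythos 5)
Your proof is correct and complete: Sylvester's criterion gives positive definiteness of the leading $r\times r$ block $A$, the invertibility of the congruence factors shows $\operatorname{rank}X=r+\operatorname{rank}S$ so the rank hypothesis forces the Schur complement $S$ to vanish, and $X$ is then congruent to $A\oplus 0$, hence positive semi-definite. The paper does not prove this proposition itself but simply quotes it from Horn and Johnson, and your Schur-complement argument is exactly the standard one behind that reference.
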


Having fixed the notation according to the above explanation, let us prove theorem~$\ref{theo:Suf}$.

\begin{proof}[Proof of theorem~$\ref{theo:Suf}$]
The case $\alpha_1+\alpha_2=1$ has been already proved in example~\ref{exam}. 
Thus we assume $\gamma \coloneqq \alpha_1+\alpha_2 > 1$ in this proof. 
Then the condition $\alpha_i\neq0$ holds due to $\alpha_i\in[0,1]$ and $\gamma = \alpha_1+\alpha_2 > 1$. 
Now, we define the matrices $T_1$ and $T_2$ as 
\begin{align*}
	T_1 &= \frac{1}{2\gamma}
	\begin{bmatrix}
		\gamma & 0 & 0 & -\beta_1\beta_2\gamma/\alpha_1\alpha_2\\
		0 & \gamma-1 & 0 & -(\gamma-1)\beta_1/\alpha_1\\
		0 & 0 & \gamma-1 & -(\gamma-1)\beta_2/\alpha_2\\
		-\beta_1\beta_2\gamma/\alpha_1\alpha_2 &
		-(\gamma-1)\beta_1/\alpha_1 & -(\gamma-1)\beta_2/\alpha_2 &
		 2-\gamma
	\end{bmatrix}
	,\\
	T_2 &= \frac{1}{2\gamma}
	\begin{bmatrix}
		\qquad 0\qquad&\qquad 0\qquad&\qquad 0\qquad&\qquad 0\\
		\qquad 0\qquad&\qquad 1\qquad&\quad \beta_1\beta_2\gamma/\alpha_1\alpha_2 \quad&\quad (\gamma-1)\beta_1/\alpha_1\\
		\qquad 0\qquad&\quad \beta_1\beta_2\gamma/\alpha_1\alpha_2 \quad&\qquad 1\qquad&\quad (\gamma-1)\beta_2/\alpha_2\\
		\qquad 0\qquad&\quad (\gamma-1)\beta_1/\alpha_1 \quad&\quad (\gamma-1)\beta_2/\alpha_2 \quad&\quad 2(\gamma-1)
	\end{bmatrix}.
\end{align*}
Let us show that $\{T_1+\Gamma(T_1), T_2+\Gamma(T_2)\}$ is a measurement of SEP and discriminates $\rho_1$ and $\rho_2$ perfectly. 
That is, let us verify that 
\begin{gather}
	T_1+\Gamma(T_1) + T_2+\Gamma(T_2) = I,\label{eqY01}\\
	\Tr \rho_i(T_j+\Gamma(T_j)) = \delta_{ij} \quad (\forall i,j=1,2),\label{eqY02}\\
	T_i+\Gamma(T_i)\in\mathrm{SEP}^*(A;B) \quad (\forall i=1,2)\label{eqY03}.
\end{gather}
The equation \eqref{eqY01} follows from the definitions of $T_1$ and $T_2$. 
The equation \eqref{eqY01} and the invariance of $\rho_1,\rho_2$ under $\Gamma$ 
reduce \eqref{eqY02} to $\Tr\rho_2T_1=0$ and $\Tr\rho_1T_2=0$. 
Since the equation $\Tr\rho_1T_2=0$ follows from the definitions of $\rho_1$ and $T_2$, 
the remaining of \eqref{eqY02} is $\Tr\rho_2T_1=0$. 
Define the two unit vectors $\ket{\psi^A}$ and $\ket{\psi^B}$ as 
\[
\ket{\psi^A} = [\sqrt{1-\alpha_1}, \sqrt{\alpha_1}]^\top,\quad
\ket{\psi^B} = [\sqrt{1-\alpha_2}, \sqrt{\alpha_2}]^\top.
\]
Then $\rho_2=\ketbra{\psi^A}{\psi^A}\otimes\ketbra{\psi^B}{\psi^B}$. 
Noting $\gamma=\alpha_1+\alpha_2$, $\beta_i=\sqrt{\alpha_i(1-\alpha_i)}$, and 
\begin{equation*}
	\ket{\psi^A\otimes\psi^B}
	= \Bigl[ \sqrt{(1-\alpha_1)(1-\alpha_2)},
	\sqrt{(1-\alpha_1)\alpha_2},
	\sqrt{\alpha_1(1-\alpha_2)},
	\sqrt{\alpha_1\alpha_2} \Bigr]^\top,
\end{equation*}
we find that all entries of $2\gamma T_1\ket{\psi^A\otimes\psi^B}$ vanish: 
\begin{align*}
	\text{The first entry}
	&= \gamma\sqrt{(1-\alpha_1)(1-\alpha_2)} - (\beta_1\beta_2\gamma/\alpha_1\alpha_2)\sqrt{\alpha_1\alpha_2}\\
	&= \gamma\sqrt{(1-\alpha_1)(1-\alpha_2)} - \sqrt{(1-\alpha_1)(1-\alpha_2)}\gamma = 0,\\
	\text{The second entry}
	&= (\gamma-1)\sqrt{(1-\alpha_1)\alpha_2} - (\gamma-1)(\beta_1/\alpha_1)\sqrt{\alpha_1\alpha_2}\\
	&= (\gamma-1)\sqrt{(1-\alpha_1)\alpha_2} - (\gamma-1)\sqrt{(1-\alpha_1)\alpha_2} = 0,\\
	\text{The third entry}
	&= (\gamma-1)\sqrt{\alpha_1(1-\alpha_2)} - (\gamma-1)(\beta_2/\alpha_2)\sqrt{\alpha_1\alpha_2}\\
	&= (\gamma-1)\sqrt{\alpha_1(1-\alpha_2)} - (\gamma-1)\sqrt{(1-\alpha_2)\alpha_1} = 0,\\
	\text{The fourth entry}
	&= -(\beta_1\beta_2\gamma/\alpha_1\alpha_2)\sqrt{(1-\alpha_1)(1-\alpha_2)}
	- (\gamma-1)(\beta_1/\alpha_1)\sqrt{(1-\alpha_1)\alpha_2}\\
	&\quad- (\gamma-1)(\beta_2/\alpha_2)\sqrt{\alpha_1(1-\alpha_2)}
	+ (2-\gamma)\sqrt{\alpha_1\alpha_2}\\
	&= -\gamma(1-\alpha_1)(1-\alpha_2)/\sqrt{\alpha_1\alpha_2}
	- (\gamma-1)(1-\alpha_1)\sqrt{\alpha_2/\alpha_1}\\
	&\quad- (\gamma-1)(1-\alpha_2)\sqrt{\alpha_1/\alpha_2}
	+ (2-\gamma)\sqrt{\alpha_1\alpha_2}\\
	&= \bigl\{ -\gamma(1-\alpha_1)(1-\alpha_2) - (\gamma-1)(1-\alpha_1)\alpha_2\\
	&\qquad- (\gamma-1)(1-\alpha_2)\alpha_1
	+ (2-\gamma)\alpha_1\alpha_2 \bigr\}/\sqrt{\alpha_1\alpha_2}\\
	&= \bigl\{ -\gamma\bigl( (1-\alpha_1) + \alpha_1 \bigr)\bigl( (1-\alpha_2) + \alpha_2 \bigr)\\
	&\qquad+ (1-\alpha_1)\alpha_2 + (1-\alpha_2)\alpha_1 + 2\alpha_1\alpha_2 \bigr\}/\sqrt{\alpha_1\alpha_2}\\
	&= (-\gamma + \alpha_1 + \alpha_2)/\sqrt{\alpha_1\alpha_2} = 0.
\end{align*}
The above fact implies $\Tr\rho_2 T_1 = \braket{\psi^A\otimes\psi^B |T_1| \psi^A\otimes\psi^B} = 0$. 
Therefore, \eqref{eqY02} holds.
\par
Finally, we verify \eqref{eqY03}. 
As already stated, \eqref{eqY03} follows from the positive semi-definiteness of $T_1$ and $T_2$. 
Since $T_1\ket{\psi^A\otimes\psi^B}=0$, the rank of $T_1$ is at most three. 
Moreover, the determinants of $T_1(1)$, $T_1(1,2)$, and $T_1(1,2,3)$ are positive due to the assumption $1<\gamma\le2$. 
Thus proposition~\ref{prop:PosCrite} implies that $T_1$ is positive semi-definite. 
The remaining of \eqref{eqY03} is the positive semi-definiteness of $T_2$. 
To prove it, we verify the positive definiteness of $T_2(2,3,4)$. 
First, the inequality of arithmetic and geometric means yields 
(i) $\gamma^2 = (\alpha_1+\alpha_2)^2 \ge 4\alpha_1\alpha_2$. 
The inequality (i) and the assumption (ii) $1<\gamma\le2$ imply 
\begin{align*}
	&\quad (2\gamma)^2\det T_2(2,3)
	= \det
	\begin{bmatrix}
		1 & \beta_1\beta_2\gamma/\alpha_1\alpha_2\\
		\beta_1\beta_2\gamma/\alpha_1\alpha_2 & 1
	\end{bmatrix}
	= 1 - (\beta_1\beta_2\gamma/\alpha_1\alpha_2)^2\\
	&= 1 - \frac{(1-\alpha_1)(1-\alpha_2)\gamma^2}{\alpha_1\alpha_2}
	= 1 - \frac{(1-\gamma+\alpha_1\alpha_2)\gamma^2}{\alpha_1\alpha_2}
	= 1-\gamma^2 + \frac{(\gamma-1)\gamma^2}{\alpha_1\alpha_2}\\
	&= (\gamma-1)\Bigl( -(1+\gamma) + \frac{\gamma^2}{\alpha_1\alpha_2} \Bigr)
	\overset{\text{(i)}}{\ge} (\gamma-1)(-1-\gamma+4) \overset{\text{(ii)}}{>} 0.
\end{align*}
Second, the determinant of $T_2(2,3,4)$ can be calculated as follows: 
\begin{align*}
	&\quad \frac{(2\gamma)^3}{\gamma-1}\det T_2(2,3,4)
	= \det
	\begin{bmatrix}
		1 & \beta_1\beta_2\gamma/\alpha_1\alpha_2 & \beta_1/\alpha_1\\
		\beta_1\beta_2\gamma/\alpha_1\alpha_2 & 1 & \beta_2/\alpha_2\\
		(\gamma-1)\beta_1/\alpha_1 & (\gamma-1)\beta_2/\alpha_2 & 2
	\end{bmatrix}
	\\
	&= 2 + 2\gamma(\bcancel{\gamma} - 1)\beta_1^2\beta_2^2/\alpha_1^2\alpha_2^2
	- (\gamma-1)\beta_1^2/\alpha_1^2 - \bcancel{2\gamma^2\beta_1^2\beta_2^2/\alpha_1^2\alpha_2^2}
	- (\gamma-1)\beta_2^2/\alpha_2^2\\
	&= 2 - \frac{2\gamma(1-\alpha_1)(1-\alpha_2)}{\alpha_1\alpha_2}
	- \frac{(\gamma-1)\{ (1-\alpha_1)\alpha_2 + \alpha_1(1-\alpha_2)\}}{\alpha_1\alpha_2}\\
	&= \bcancel{2} - \frac{2\gamma(1-\gamma + \bcancel{\alpha_1\alpha_2})}{\alpha_1\alpha_2}
	- \frac{(\gamma-1)(\gamma - \bcancel{2\alpha_1\alpha_2})}{\alpha_1\alpha_2}\\
	&= \frac{2\gamma(\gamma-1)}{\alpha_1\alpha_2} - \frac{(\gamma-1)\gamma}{\alpha_1\alpha_2}
	= \frac{\gamma(\gamma-1)}{\alpha_1\alpha_2} \overset{\text{(ii)}}{>} 0.
\end{align*}
Since the determinants of $T_2(2)$, $T_2(2,3)$, and $T_2(2,3,4)$ are positive, 
proposition~\ref{prop:PosCrite} implies that $T_2(2,3,4)$ is positive semi-definite. 
Therefore, $T_2$ is also positive semi-definite and then we finish this proof.
\end{proof}

\section{Proof of the necessity of theorem~\ref{theo:Main}}\label{sect4}

In this section, we prove the necessity of theorem~\ref{theo:Main}. 
As stated in section~\ref{sect3}, for all integers $d_A,d_B\ge2$, 
the $(d_A,d_B)$-dimensional case can be reduced to the $(2,2)$-dimensional case. 
Although $\rho_1$ and $\rho_2$ in theorem~\ref{theo:Main} are pure states, 
we address not necessarily pure product states in the $(2,2)$-dimensional case. 
Hence, without loss of generality, it suffices to prove the following theorem.

\begin{theo}[Necessity of theorem~\ref{theo:Main}]\label{theo:Nec}
	Assume that two product states $\rho_1$ and $\rho_2$ in SEP are given as 
	\begin{equation}
		\rho_1 = 
		\begin{bmatrix}
			1-p_1 & 0\\
			0 & p_1
		\end{bmatrix}
		\otimes
		\begin{bmatrix}
			1-p_2 & 0\\
			0 & p_2
		\end{bmatrix}
		,\quad \rho_2 = 
		\begin{bmatrix}
			1-\alpha_1 & \beta_1\\
			\beta_1 & \alpha_1
		\end{bmatrix}
		\otimes
		\begin{bmatrix}
			1-\alpha_2 & \beta_2\\
			\beta_2 & \alpha_2
		\end{bmatrix}
		,\label{state}
	\end{equation}
	where $\alpha_i,p_i\in[0,1]$, $\beta_i\ge0$, and $\beta_i^2 \le \alpha_i(1-\alpha_i)$ for all $i=1,2$.
	If $\rho_1$ and $\rho_2$ are perfectly distinguishable in SEP, then 
	\begin{equation}
		\Tr\rho_1\rho_2 \le \beta_1\beta_2\abs{(2p_1-1)(2p_2-1)} \le \beta_1\beta_2 \le 1/4. \label{eq:mix}
	\end{equation}
	If the additional condition $p_1=p_2=0$ holds, then \eqref{eq:mix} can be reduced to 
	\begin{equation}
		\alpha_1 + \alpha_2 \ge 1. \label{eq:pure}
	\end{equation}
\end{theo}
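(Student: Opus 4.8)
The plan is to start from an arbitrary SEP-measurement $\{M_1,M_2\}$ that discriminates $\rho_1$ and $\rho_2$ perfectly and to squeeze the defining constraints down to \eqref{eq:mix}. Since the reduction to the $(2,2)$-dimensional case is already in place, proposition~\ref{prop:Horodecki} lets me write $M_1=T_1+\Gamma(T_1')$ and $M_2=T_2+\Gamma(T_2')$ with $T_1,T_1',T_2,T_2'\in\cT_+(AB)$. The first key observation is that both $\rho_1$ and $\rho_2$ are invariant under the partial transpose $\Gamma$ (their local blocks are real symmetric), so $\Tr\Gamma(T')\rho_i=\Tr T'\Gamma(\rho_i)=\Tr T'\rho_i$. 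Hence the vanishing conditions split: from $\Tr M_1\rho_2=0$ I obtain $\Tr T_1\rho_2=\Tr T_1'\rho_2=0$, and from $\Tr M_2\rho_1=0$ I obtain $\Tr T_2\rho_1=\Tr T_2'\rho_1=0$, each summand being nonnegative because the $T$'s and the $\rho$'s are positive semi-definite. Consequently $T_1,T_1'$ are supported on $\ker\rho_2$ and $T_2,T_2'$ on $\ker\rho_1$.

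Next I would record a structural consequence: if $\rho_2$ had full rank then $\Tr T_1\rho_2=0$ with $T_1\ge0$ forces $T_1=T_1'=0$, hence $M_1=0$, contradicting $\Tr M_1\rho_1=1$; the same applies to $\rho_1$ via $M_2$. Therefore perfect discrimination forces both $\rho_1$ and $\rho_2$ to be rank-deficient, i.e.\ at least one tensor factor of each is rank one. Using the $A\leftrightarrow B$ symmetry I would reduce to the representative situation in which $\rho_2^A=\ketbra{\psi^A}{\psi^A}$ is rank one, so that $\ker\rho_2$ is $\ket{\psi^A_\perp}\otimes\cH_B$ (or is the three-dimensional complement of $\ket{\psi^A\psi^B}$ when $\rho_2^B$ is rank one as well). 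This pins down explicit low-rank forms for $T_1,T_1'$, and symmetrically for $T_2,T_2'$, leaving only a handful of scalar parameters.

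The crux --- and the step I expect to be the main obstacle --- is to convert these constraints into the numerical bound $\Tr\rho_1\rho_2\le\beta_1\beta_2\abs{(2p_1-1)(2p_2-1)}$. Here I would impose the remaining data: the normalisation $\Tr(T_1+T_1')\rho_1=1$ (again using $\Gamma(\rho_1)=\rho_1$), the completeness relation $(T_1+T_2)+\Gamma(T_1'+T_2')=I$, and --- crucially --- the positive semi-definiteness of each $T$. The latter is handled by proposition~\ref{prop:PosCrite}: the $2\times2$ principal minors give Cauchy--Schwarz-type inequalities tying the off-diagonal entries (exactly the entries that, after the $\Gamma$ operation, feed into $\Tr\rho_1\rho_2$ and produce the factor $\beta_1\beta_2$) to the diagonal entries. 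The delicate point is the bookkeeping: the partial transpose reshuffles the $\{00,11\}$ and $\{01,10\}$ blocks, so one must track precisely which entries of $T_1,T_1'$ contribute to $\Tr\rho_1\rho_2$ and then apply an AM--GM / Cauchy--Schwarz estimate that lands on $\beta_1\beta_2\abs{(2p_1-1)(2p_2-1)}$ rather than a weaker constant. I expect the bulk of the computation to live here, and this is also where the factor $\abs{(2p_1-1)(2p_2-1)}$ --- the Bloch-vector lengths of $\rho_1^A,\rho_1^B$ --- enters.

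The remaining two inequalities in \eqref{eq:mix} are then immediate: $\abs{2p_i-1}\le1$ for $p_i\in[0,1]$ gives $\beta_1\beta_2\abs{(2p_1-1)(2p_2-1)}\le\beta_1\beta_2$, and $\beta_i^2\le\alpha_i(1-\alpha_i)\le1/4$ gives $\beta_1\beta_2\le1/4$. Finally, for the reduction to \eqref{eq:pure} I would set $p_1=p_2=0$, so that $\abs{(2p_1-1)(2p_2-1)}=1$ and $\Tr\rho_1\rho_2=(1-\alpha_1)(1-\alpha_2)$; in the pure-state setting of theorem~\ref{theo:Main} one has $\beta_i=\sqrt{\alpha_i(1-\alpha_i)}$, so \eqref{eq:mix} becomes $(1-\alpha_1)(1-\alpha_2)\le\sqrt{\alpha_1\alpha_2(1-\alpha_1)(1-\alpha_2)}$, which on squaring and cancelling the common factor $(1-\alpha_1)(1-\alpha_2)$ is exactly $\alpha_1+\alpha_2\ge1$.
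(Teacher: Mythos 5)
Your preliminary reductions are all correct --- using $\Gamma$-invariance of $\rho_1,\rho_2$ to split $\Tr M_1\rho_2=0$ into $\Tr T_1\rho_2=\Tr T_1'\rho_2=0$, the resulting support conditions, and the observation that perfect discrimination forces both states to be rank-deficient --- and this is indeed how the paper's proof begins. But the proposal has a genuine gap exactly where the theorem lives: the bound $\Tr\rho_1\rho_2 \le \beta_1\beta_2\abs{(2p_1-1)(2p_2-1)}$, with its precise constant, is never derived. You acknowledge this step as ``the main obstacle'' and ``where the bulk of the computation lives,'' and what you offer for it is a plan (completeness relation plus principal-minor/Cauchy--Schwarz estimates on a parametrized family of low-rank $T$'s), not an argument. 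Nothing in the proposal shows that this bookkeeping terminates in the stated bound rather than a weaker one, and since \eqref{eq:mix} \emph{is} the content of the theorem, the proof is incomplete.

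The ingredient you are missing is a symmetrization that makes the computation short and exact. Because $\Gamma$ fixes both states, replacing each pair $(T_i,T_i')$ by its average $\widetilde{T}_i=(T_i+T_i')/2$ yields a measurement of the canonical form $\{\widetilde{T}_1+\Gamma(\widetilde{T}_1),\,\widetilde{T}_2+\Gamma(\widetilde{T}_2)\}$ that still discriminates perfectly, with $\Tr\rho_2\widetilde{T}_1=\Tr\rho_1\widetilde{T}_2=0$. Completeness then reads $T+\Gamma(T)=I$ for $T\coloneqq\widetilde{T}_1+\widetilde{T}_2$, and this single equation pins the $\Gamma$-symmetric part of $T$ to $I/2$, leaving only a few free parameters:
\[
T=\frac{1}{2}I+\begin{bmatrix}x_1&0\\ 0&x_2\end{bmatrix}\otimes\begin{bmatrix}0&-i\\ i&0\end{bmatrix}+\begin{bmatrix}0&-z\\ \overline{z}&0\end{bmatrix}\otimes\begin{bmatrix}0&1\\ -1&0\end{bmatrix},\qquad x_1,x_2\in\mathbb{R},\ z\in\mathbb{C},
\]
with $\abs{z}\le1/2$ forced by $T\ge0$. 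Your support conditions give precisely $\rho_1T\rho_2=0$, so $\Tr\rho_1\rho_2=\Tr\rho_1(T+\Gamma(T))\rho_2=\Tr\Gamma(T)\rho_2\rho_1$, and a direct evaluation with the displayed form of $\Gamma(T)$ yields $\Tr\rho_1\rho_2=2(\Re z)\beta_1\beta_2(2p_1-1)(2p_2-1)$; the bound \eqref{eq:mix} is then immediate. Without this (or an equivalent) exact parametrization, your principal-minor strategy remains speculative. A smaller separate point: in the reduction to \eqref{eq:pure} you assume $\beta_i=\sqrt{\alpha_i(1-\alpha_i)}$, but theorem~\ref{theo:Nec} only assumes $\beta_i^2\le\alpha_i(1-\alpha_i)$; the fix is to use $\beta_1\beta_2\le\sqrt{\alpha_1(1-\alpha_1)\alpha_2(1-\alpha_2)}=\sqrt{\alpha_1\alpha_2\Tr\rho_1\rho_2}$ (valid when $p_1=p_2=0$), after which the same squaring and cancellation give $\alpha_1+\alpha_2\ge1$.
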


If the additional condition $p_1=p_2=0$ holds, 
the equation \eqref{state} turns to \eqref{state:ex}. 
Hence \eqref{eq:pure} means the necessity of theorem~\ref{theo:Main}. 
Since theorem~\ref{theo:Nec} handles mixed states, 
another additional condition leads to a result which theorem~\ref{theo:Main} does not imply. 
For instance, theorem~\ref{theo:Main} does not imply the following: 
if the additional condition $\beta_1=0$ holds, 
then \eqref{eq:mix} implies that $\rho_1$ and $\rho_2$ are orthogonal. 
Thus, if $\rho_1^A$ and $\rho_2^A$ are diagonal matrices, 
the perfect discrimination of $\rho_1=\rho_1^A\otimes\rho_1^B$ and $\rho_2=\rho_2^A\otimes\rho_2^B$ in SEP 
implies the orthogonality of $\rho_1$ and $\rho_2$.

\begin{proof}[Proof of theorem~$\ref{theo:Nec}$.]
Proposition~\ref{prop:Horodecki} implies that 
any element of $\mathrm{SEP}^\ast(A;B)$ can be expressed as $S+\Gamma(S')$ 
with positive semi-definite matrices $S$ and $S'$. 
Thus we assume that a measurement $\{S_1+\Gamma(S'_1), S_2+\Gamma(S'_2)\}$ 
with positive semi-definite matrices $S_1$, $S'_1$, $S_2$ and $S'_2$ 
discriminates $\rho_1$ and $\rho_2$ perfectly in SEP. 
That is, the equation 
\[
\Tr\rho_1(S_2+\Gamma(S'_2)) = \Tr\rho_2(S_1+\Gamma(S'_1)) = 0
\]
holds. Since $\Gamma$ fixes $\rho_1$ and $\rho_2$, 
the equation 
\[
\Tr\rho_1(\Gamma(S_2)+S'_2) = \Tr\rho_2(\Gamma(S_1)+S'_1) = 0
\]
also holds. Put $T_i=(S_i+S'_i)/2$ for all $i=1,2$. 
Then $\{T_1+\Gamma(T_1), T_2+\Gamma(T_2)\}$ is a measurement of SEP 
and satisfies 
\begin{equation*}
	\Tr\rho_1(T_2+\Gamma(T_2)) = \Tr\rho_2(T_1+\Gamma(T_1)) = 0.
\end{equation*}
Since $\Gamma$ fixes $\rho_1$ and $\rho_2$, 
the above equation can be reduced to 
\begin{equation}
	\Tr\rho_1 T_2 = \Tr\rho_2 T_1 = 0. \label{eqY1}
\end{equation}
Furthermore, the matrix $T \coloneqq T_1+T_2$ satisfies 
\begin{equation*}
	T + \Gamma(T) = I.
\end{equation*}
Thus the positive semi-definite matrix $T$ can be written as 
\begin{align*}
	T &= 
	\left[
	\begin{array}{cccc}
		1/2&-ix_1&0&-z\\
		ix_1&1/2&z&0\\ 
		0&\overline{z}&1/2&-ix_2\\
		-\overline{z}&0&ix_2&1/2
	\end{array}
	\right]\\
	&= \frac{1}{2}I + 
	\begin{bmatrix}
		x_1 & 0\\
		0 & x_2
	\end{bmatrix}
	\otimes
	\begin{bmatrix}
		0 & -i\\
		i & 0
	\end{bmatrix}
	+
	\begin{bmatrix}
		0 & -z\\
		\overline{z} & 0
	\end{bmatrix}
	\otimes
	\begin{bmatrix}
		0 & 1\\
		-1 & 0
	\end{bmatrix}
\end{align*}
for some $x_1,x_2\in\mathbb{R}$ and $z\in\mathbb{C}$. 
From the positive semi-definiteness of $T$, the inequality $|z|\le1/2$ follows.
\par
Now, we show \eqref{eq:mix}. 
Since \eqref{eqY1} implies $\rho_1 T\rho_2 = \rho_1(T_1+T_2)\rho_2 = 0$, 
we obtain 
\[
\Tr\rho_1\rho_2 = \Tr\rho_1(T + \Gamma(T))\rho_2
= \Tr\rho_1\Gamma(T)\rho_2 = \Tr\Gamma(T)\rho_2\rho_1.
\]
Hereinafter, $\Re X$ denotes the real part of a matrix $X$. 
Then the matrices $\rho_2\rho_1$ and $\Re\Gamma(T)$ are calculated as follows: 
\begin{gather*}
	\begin{split}
		\rho_2\rho_1 &= 
		\begin{bmatrix}
			1-\alpha_1 & \beta_1\\
			\beta_1 & \alpha_1
		\end{bmatrix}
		\begin{bmatrix}
			1-p_1 & 0\\
			0 & p_1
		\end{bmatrix}
		\otimes
		\begin{bmatrix}
			1-\alpha_2 & \beta_2\\
			\beta_2 & \alpha_2
		\end{bmatrix}
		\begin{bmatrix}
			1-p_2 & 0\\
			0 & p_2
		\end{bmatrix}
		\\
		&= 
		\begin{bmatrix}
			(1-\alpha_1)(1-p_1) & \beta_1 p_1\\
			\beta_1(1-p_1) & \alpha_1 p_1
		\end{bmatrix}
		\otimes
		\begin{bmatrix}
			(1-\alpha_2)(1-p_2) & \beta_2 p_2\\
			\beta_2(1-p_2) & \alpha_2 p_2
		\end{bmatrix},
	\end{split}
	\\
	\Re\Gamma(T) = \frac{1}{2}I + 
	\begin{bmatrix}
		0 & -\Re z\\
		\Re z & 0
	\end{bmatrix}
	\otimes
	\begin{bmatrix}
		0 & -1\\
		1 & 0
	\end{bmatrix}
	= \frac{1}{2}I + (\Re z)
	\begin{bmatrix}
		0 & -1\\
		1 & 0
	\end{bmatrix}
	^{\otimes2}.
\end{gather*}
Since all entries of $\rho_2\rho_1$ and $\Tr \rho_1\rho_2$ are real, 
the equation 
\[
\Tr\rho_1\rho_2 = \Tr\Gamma(T)\rho_2\rho_1 = \Re\Tr\Gamma(T)\rho_2\rho_1 = \Tr\Re(T(\Gamma))\rho_2\rho_1
\]
holds. Thus it follows that 
\begin{align}
	\Tr\rho_1\rho_2
	&= \Tr\Re(T(\Gamma))\rho_2\rho_1\nonumber\\
	&= \frac{1}{2}\Tr\rho_2\rho_1 + (\Re z)\Tr
	\begin{bmatrix}
		0 & -1\\
		1 & 0
	\end{bmatrix}
	\begin{bmatrix}
		(1-\alpha_1)(1-p_1) & \beta_1 p_1\\
		\beta_1(1-p_1) & \alpha_1 p_1
	\end{bmatrix}
	\nonumber\\
	&\quad\cdot \Tr
	\begin{bmatrix}
		0 & -1\\
		1 & 0
	\end{bmatrix}
	\begin{bmatrix}
		(1-\alpha_2)(1-p_2) & \beta_2 p_2\\
		\beta_2(1-p_2) & \alpha_2 p_2
	\end{bmatrix}
	\nonumber\\
	&= \frac{1}{2}\Tr\rho_2\rho_1 + (\Re z)\beta_1(2p_1-1)\beta_2(2p_2-1),\nonumber\\
	\Tr\rho_1\rho_2 &= 2(\Re z)\beta_1\beta_2(2p_1-1)(2p_2-1). \label{eqH2}
\end{align}
Therefore, the inequality $|z|\le1/2$ and \eqref{eqH2} imply \eqref{eq:mix}.
\par
Next, assuming the additional condition $p_1=p_2=0$, we show \eqref{eq:pure}. 
The inequality $\beta_i^2 \le \alpha_i(1-\alpha_i)$ and \eqref{eq:mix} imply that 
\begin{gather*}
	\Tr\rho_1\rho_2 \le \beta_1\beta_2 \le \sqrt{\alpha_1(1-\alpha_1)\alpha_2(1-\alpha_2)}
	= \sqrt{\alpha_1\alpha_2\Tr\rho_1\rho_2},\\
	(1-\alpha)(1-\alpha_2) = \Tr\rho_1\rho_2 \le \alpha_1\alpha_2,
\end{gather*}
whence \eqref{eq:pure} holds.
\end{proof}

\section{Discussion}\label{sect5}
In this paper, we have discussed perfect discrimination in SEP, 
and has revealed that a necessary and sufficient condition to discriminate two pure states in SEP perfectly 
is that the inequality $\Tr \rho_1^A\rho_2^A+\Tr \rho_1^B\rho_2^B\leq1$ be satisfied.
More generally, let us consider perfect discrimination of two mixed states in SEP.
Two states $\rho_1,\rho_2\in\mathrm{SEP}(A;B)$ can be written as 
$\rho_1=\sum_i \alpha_i\rho_{1,i}$ and $\rho_2=\sum_j \beta_j\rho_{2,j}$,
where $\rho_{1,i}$ and $\rho_{2,j}$ are separable pure states; $\alpha_i,\beta_j>0$ and $\sum_i \alpha_i=\sum_j \beta_j=1$. 
In this case, perfect discrimination of $\rho_1$ and $\rho_2$ is equivalent to 
that of any $\rho_{1,i}$ and $\rho_{2,j}$ by a common measurement.
Hence the tuple of the above inequalities for any $\rho_{1,i}$ and $\rho_{2,j}$ 
is a necessary condition for perfect discrimination.
However, a sufficient condition must be more strict 
because the tuple of the above inequalities for any $\rho_{1,i}$ and $\rho_{2,j}$ 
does not imply the existence of a common measurement. 
Since a necessary and sufficient condition for perfect discrimination of two mixed states in SEP is not given yet,
it is a future study. 

As a related study, 
Maruyama et al.\ \cite{Maruyama.et.al:2009} pointed out the possibility of the break of the second thermodynamical law
when non-orthogonal states can be discriminated perfectly.
Since we have shown the above perfect discrimination in SEP,
it is another interesting future study to investigate the second thermodynamical law in SEP.
As another future study, 
we might apply our result to the calculation of the various type of information quantity 
defined in \cite[eqs.~(18), (26), and (29)]{Short.et.al:2010}.
This application is expected to bring us more information-theoretical study on SEP.


\ack
MH is grateful to Prof.\ Giulio Chiribella, Prof.\ Oscar Dahlsten, and Dr.\ Daniel Ebler
for helpful discussions.
He is also thankful to Mr.\ Kun Wang for his comments.
The authors are grateful to Mr.\ Seunghoan Song for providing many helpful comments for this paper. 
MH was supported in part by Japan Society for the Promotion of 
Science (JSPS) Grant-in-Aid for Scientific Research (A) No.\ 17H01280, 
(B) No.\ 16KT0017, 
and Kayamori Foundation of Informational Science Advancement. 
YY was supported by JSPS Grant-in-Aid for JSPS Fellows No.\ 19J20161.

\section*{References}

\bibliography{references}

\end{document}